\documentclass[10pt,twocolumn,twoside]{IEEEtran}
\IEEEoverridecommandlockouts

\usepackage{cite}
\usepackage{amsthm,amsmath,amssymb,amsfonts}
\usepackage{algorithmic}
\usepackage{graphicx}
\usepackage{textcomp}
\usepackage{balance}
\usepackage{xcolor}
\usepackage{booktabs}

\usepackage[utf8]{inputenc}

\usepackage{mathtools}
\newtheorem{theorem}{Theorem}

\usepackage{tikz}
\usetikzlibrary{arrows.meta,bbox}
\usetikzlibrary{shapes}
\usetikzlibrary{patterns}
\usetikzlibrary{decorations.pathreplacing}
\usetikzlibrary{tikzmark}

\newcommand\norm[1]{\lVert#1\rVert}
\newcommand{\minimize}[1]{\underset{{#1}}{\text{minimize}}}
\newcommand{\maximize}[1]{\underset{{#1}}{\text{maximize}}}
\newcommand{\st}{\text{subject to}}
\newcommand{\mb}[1]{\mathbf{#1}}
\newcommand{\mbg}[1]{\boldsymbol{#1}}

\definecolor{red}{RGB}{186, 40, 58}
\definecolor{green}{RGB}{34,139,34}
\definecolor{maincolor}{HTML}{00274C}
\definecolor{blue}{HTML}{004D80}
\definecolor{purple}{RGB}{128,0,128}

\def\BibTeX{{\rm B\kern-.05em{\sc i\kern-.025em b}\kern-.08em
    T\kern-.1667em\lower.7ex\hbox{E}\kern-.125emX}}
\usepackage{lipsum}

\title{\LARGE \bf 
Online Electricity Pricing from Frequency Measurements
}

\author{Xinwei Liu$^\dagger$, Vladimir Dvorkin$^\dagger$
\vspace{-0.5cm}
 \thanks{ 
$^\dagger$The authors are with the Department of Electrical Engineering and Computer Science, University of Michigan, Ann Arbor, MI, USA. Email: {\tt \{xinwei,dvorkin\}@umich.edu}}
}
\begin{document}
\begingroup
\allowdisplaybreaks

\maketitle

\begin{abstract}

Frequency dynamics in power systems reflect active power imbalance in real time, thereby providing an instantaneous signal to inform electricity pricing. However, existing real-time markets operate on much slower timescales and fail to exploit this signal. In this letter, we develop integrated market--frequency dynamics that enable \textit{online} pricing directly from frequency measurements. Representing the real-time market as a dynamic price-discovery process, and integrating this process with the grid frequency dynamics, we derive an explicit price formation mechanism from frequency measurements. This mechanism manifests as a distributed PID-like controller for each generator, where frequency response is driven and remunerated by electricity prices derived solely from local frequency measurements.
\end{abstract}

\begin{IEEEkeywords}
Economic dispatch, electricity pricing, cost recovery, frequency control, online optimization
\end{IEEEkeywords}


\section{Introduction}



Electricity as a commodity is inextricably linked to the physical system, where power imbalances directly translate into measurable changes in system frequency. However, electricity pricing operates on substantially slower timescales than frequency dynamics. Motivated by reliability, electricity is traded \textit{offline}, with transactions settled well before operation, from days ahead (bilateral trading, day-ahead markets) to near real-time markets (5--15 minutes before operation) \cite{kirschen2018fundamentals}, and any frequency deviation is addressed in real-time by fast automatic control rather than by markets. This timescale separation decouples price signals from the actual grid state and prevents the delivery of theoretical market efficiency to real-time operations. Indeed, real-time prices fixed for 5--15 minutes fail to internalize the full variability of renewable generation and stochastic loads evolving on subminute timescales.

In this letter, we bridge the gap between real-time markets and operations by developing \textit{online} real-time electricity pricing that acts on frequency timescales and provably delivers the theoretical properties of offline markets in real-time. Our work is inspired by the literature on optimal frequency control \cite{zhang2015achieving,zhao2016unified,mallada2017optimal}, which integrates economic dispatch objectives into automatic generator control, and by recent research on incentive alignment between private (e.g., profit) and system (e.g., stability) objectives in such control \cite{you2025stability}.  We build on these foundations by making the economic meaning of frequency feedback explicit: we derive a frequency-based price signal with an explicit PID structure that reproduces the dual price dynamics of economic dispatch. This price is obtained from local frequency measurements, and frequency/balance restoration is achieved through a distributed generator response to these prices. Furthermore, we optimize the generator response to guarantee the profitability under online electricity pricing.

\textit{Notation:} Bold lowercase $\mb{x}$ (uppercase $\mb{X}$) letters denote vectors (matrices), calligraphic letters $\mathcal{X}$ are reserved for sets. Notation $\mb{x}^\star$ denotes an optimal value, $\mb{x}_{i}$ is the $i^\text{th}$ element of a vector $\mb{x}$, $\Pi_{\mathcal{P}}[\mb{x}]$ denotes the projection of $\mb{x}$ onto the set $\mathcal{P}$, and $\dot{\mb{x}}$ denotes the time derivative of $\mb{x}$.



\section{Integrated Market-Frequency Dynamics}

Consider a vector of fixed generator dispatch $\mb{p}^{\star}$ from the day-ahead market to meet the expected electricity demand $d$. In real time, when the demand deviates by $\delta$, the goal of the real-time economic dispatch (ED) problem is to find a cost-optimal generator regulation $\mb{r}$ that offsets this deviation. This can be formulated as the following optimization problem:
\begin{subequations}\label{prob:ED}
\begin{align}
    \minimize{\mb{r}\in\mathcal{P}}\quad& c(\mb{r})\coloneqq \tfrac{1}{2}\norm{\mb{p}^\star + \mb{r}}^2_{\mb{C}} + \mb{c}^\top (\mb{p}^\star + \mb{r}) \label{ED:cost_function}\\
    \st\quad & \mb{1}^\top (\mb{p}^\star + \mb{r}) - d - \delta = 0\quad : \lambda \label{ED:power_balance}
\end{align}
\end{subequations}
where $\mathcal{P}\coloneqq\{\mb{r} \,|\,\underline{\mb{p}} \leqslant \mb{p}^\star + \mb{r} \leqslant \overline{\mb{p}}\}$ defines the feasible set for generator regulation. The objective is to minimize the real-time generation cost \eqref{ED:cost_function}, subject to generator constraints and active power balance condition \eqref{ED:power_balance}. The dual variable $\lambda$ is the marginal electricity price; for zero demand deviation, the optimal price is that from the day-ahead market $\lambda^{\star}=\lambda^{\text{da}}.$

This problem is typically solved every 5 to 15 minutes, but we aim at solving it on substantially faster timescales. Towards this goal, we analyze its partial Lagrangian function: 
\begin{align}
    \mathcal{L}^{\text{ED}}(\mb{r} ,\lambda) &= c(\mb{r}) + \lambda(\delta - \mb{1}^\top \mb{r}),
\end{align}
where we used the fact that $\mb{1}^\top\mb{p}^\star-d=0.$ The optimal solution to \eqref{prob:ED} can be characterized as the equilibrium point of the following primal-dual dynamical system:
\begin{subequations}\label{system:primal-dual-ED}
\begin{align}
    \dot{\mb{r}} &= \Pi_{\mathcal{P}}\left[-\nabla_{\mb{r}}\mathcal{L}^{\text{ED}}(\mb{r} ,\lambda)\right]= \Pi_{\mathcal{P}}\left[\lambda\mb{1} - \nabla_{\mb{r}}c(\mb{r})\right],\label{r_dynamics}\\ 
    \dot{\lambda} &= \nabla_{\lambda}\mathcal{L}^{\text{ED}}(\mb{r} ,\lambda) =\delta - \mb{1}^\top\mb{r}, \label{lambda_dynamics}
\end{align}
\end{subequations}
where generators continuously update their regulation $\mb{r}$ by projecting their best response to the current price (given their marginal cost $\nabla_{\mb{r}}c(\mb{r})$) onto the feasible set $\mathcal{P}$. The regulation price $\lambda$ is continuously updated according to the active power imbalance, and system \eqref{system:primal-dual-ED} resembles the classic Walrasian t\^atonnement (price adjustment) process \cite{uzawa1960walras}. 

The power imbalance can also be described by the frequency deviation $\omega$. For a copper-plate model, the frequency dynamic is abstracted by the linearized swing equation \cite{kundur2007power}: 
\begin{equation} \label{frequency_D}
\dot{\omega} = \tfrac{1}{M}\left(\mb{1}^\top \mb{r} - D\omega - \delta\right),
\end{equation}
where $M$ and $D$ are system inertia and damping, respectively.  

To integrate the two dynamical systems, consider the Lyapunov function associated with the frequency dynamic: 
\begin{align}
    \mathcal{L}^{\text{FD}} (\omega) &= \tfrac{1}{2DM}(D\omega - \mb{1}^\top \mb{r} + \delta)^2
\end{align}
such that, treating $\mb{r}$ and $\delta$ as exogenous signals, the frequency dynamic can be written as the gradient flow $\dot{\omega}=-\nabla_{\omega}\mathcal{L}^{\text{FD}}(\omega)$, which recovers the swing equation. 

Then, we introduce the composite Lagrangian function of the integrated market-frequency dynamics:
\begin{align}
    \mathcal{L} (\mb{r},\omega,\lambda,\gamma) &\coloneqq \mathcal{L}^{\text{ED}}(\mb{r} ,\lambda) + \gamma\mathcal{L}^{\text{FD}} (\omega) \label{eq:com_lagrangian}
\end{align}
which augments the Lagrangian of the ED problem with a frequency constraint penalized by an auxiliary dual variable $\gamma$. The primal-dual dynamics seeking the saddle point of the composite Lagrangian function \eqref{eq:com_lagrangian} takes the form:
\begin{subequations}\label{system:primal-dual-composite-lag}
\begin{align} 
\dot{\mb{r}} &= \Pi_{\mathcal{P}}\left[-\nabla_{\mb{r}}\mathcal{L} (\mb{r},\omega,\lambda,\gamma)\right] \nonumber\\
&= \Pi_{\mathcal{P}}\left[\lambda\mb{1} - \tfrac{\gamma}{DM}(\mb{1}^\top \mb{r} - D\omega - \delta)\mb{1}- \nabla_{\mb{r}}c(\mb{r})\right],\label{primal-dual-composite-lag:regulation}\\
\dot{\omega} &= -\nabla_{\omega}\mathcal{L} (\mb{r},\omega,\lambda,\gamma) = \tfrac{\gamma}{M}(\mb{1}^\top \mb{r} - D\omega - \delta),\label{primal-dual-composite-lag:freqeuncy}\\
\dot{\lambda}& = \nabla_{\lambda}\mathcal{L} (\mb{r},\omega,\lambda,\gamma) = \delta - \mb{1}^\top \mb{r}, \label{primal-dual-composite-lag:lambda}\\
\dot{\gamma} &= \nabla_{\gamma}\mathcal{L} (\mb{r},\omega,\lambda,\gamma)=\tfrac{1}{2DM}(D\omega - \mb{1}^\top \mb{r} + \delta)^2.\label{primal-dual-composite-lag:gamma}
\end{align}
\end{subequations}

Compared to system \eqref{system:primal-dual-ED}, generators respond to the composite price informed by both active power imbalance via equation \eqref{primal-dual-composite-lag:lambda} and frequency deviation via equation \eqref{primal-dual-composite-lag:gamma} on the same timescale. An important result from \cite{zhao2016unified,zhang2015achieving,mallada2017optimal} is that this system converges to the social optimum. We briefly revisit this result. 
\begin{theorem}[Social optimum] \label{th:socially-optimal-restoration}
The dynamical system \eqref{system:primal-dual-composite-lag} converges to the solution of the optimization problem
\begin{subequations}\label{problem:social-optimum}
\begin{align}
    \minimize{\mb{r}\in\mathcal{P}, \omega}\quad& c(\mb{r}) \\
    \st\quad
    & \mb{1}^\top \mb{r} - \delta = 0,  \label{social-optimum:lambda}\\
    & D\omega - \mb{1}^\top \mb{r} + \delta = 0,  \label{social-optimum:gamma}
\end{align}
\end{subequations}
which simultaneously ensures power balance and zero frequency deviation at the minimum generation cost. 
\end{theorem}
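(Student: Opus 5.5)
Since the problem is a convex quadratic program with linear constraints, the natural route is a Lyapunov argument for projected primal-dual (saddle-point) flows, in the spirit of \cite{zhao2016unified,mallada2017optimal}.

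\textbf{Equilibria are KKT points.} Set $\dot{\mb r}=0$, $\dot\omega=0$, $\dot\lambda=0$, $\dot\gamma=0$ in \eqref{system:primal-dual-composite-lag}. From \eqref{primal-dual-composite-lag:lambda} we get $\mb 1^\top\mb r=\delta$, which is \eqref{social-optimum:lambda}. From \eqref{primal-dual-composite-lag:gamma} we get $D\omega-\mb 1^\top\mb r+\delta=0$, which is \eqref{social-optimum:gamma}; with the first constraint this forces $\omega=0$. Substituting into \eqref{primal-dual-composite-lag:regulation}, the stationarity condition reduces to $0=\Pi_{\mathcal P}[\lambda\mb 1-\nabla c(\mb r)]$. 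This is the variational-inequality form of the KKT stationarity condition of \eqref{problem:social-optimum} with multiplier $\lambda$ on \eqref{social-optimum:lambda}. The multiplier of \eqref{social-optimum:gamma} can be taken as $0$, because $\omega$ is free and appears only in that constraint. Conversely, every KKT point gives an equilibrium for every $\gamma$. If $\mb C\succ0$, $c$ is strictly convex, so $\mb r^\star$ is unique and $\omega^\star=0$.

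\textbf{Lyapunov candidate.} Fix an equilibrium $(\mb r^\star,\omega^\star,\lambda^\star,\gamma^\star)$ and take
\[
V=\tfrac12\norm{\mb r-\mb r^\star}^2+\tfrac{M}{2}(\omega-\omega^\star)^2+\tfrac12(\lambda-\lambda^\star)^2+\tfrac12(\gamma-\gamma^\star)^2 .
\]
Write $z=(\mb r,\omega)$ and $\mu=(\lambda,\gamma)$. The composite Lagrangian \eqref{eq:com_lagrangian} is convex in $z$, since $c$ is convex, $\mathcal{L}^{\text{FD}}$ is a convex quadratic in $(\mb r,\omega)$ and $\gamma\ge0$. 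It is linear, hence concave, in $\mu$. The factor $M$ is chosen so that $M\dot\omega$ matches $-\gamma\,\partial_\omega\mathcal{L}^{\text{FD}}$ up to the normalization fixed in \eqref{primal-dual-composite-lag:freqeuncy}; if the scalings do not line up, I would weight the $\omega$ term accordingly.

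\textbf{Decrease of $V$.} Using the projection property $(\mb r-\mb r^\star)^\top\Pi_{\mathcal P}[v]\le(\mb r-\mb r^\star)^\top v$ for $\mb r^\star\in\mathcal P$, we get
\[
\dot V\le-(z-z^\star)^\top\nabla_z\mathcal L+(\mu-\mu^\star)^\top\nabla_\mu\mathcal L .
\]
Convexity in $z$ and concavity in $\mu$ then give
\[
\dot V\le \mathcal L(z^\star,\mu)-\mathcal L(z,\mu^\star)\le0 .
\]
The right inequality is the saddle-point property. It holds because at $z^\star$ both constraints vanish, so $\mathcal L(z^\star,\mu)=c(\mb r^\star)$ for every $\mu$, while $\mathcal L(z,\mu^\star)\ge c(\mb r^\star)$ by the KKT conditions. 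Hence trajectories are bounded and $\gamma$ stays bounded.

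\textbf{Convergence via LaSalle (main obstacle).} Apply LaSalle's invariance principle for projected dynamical systems, which is a discontinuous vector field, so I would invoke the Carathéodory-solution version used in \cite{cherukuri-type} analyses. On the invariant set $\dot V=0$ we need:
\begin{itemize}
\item strict convexity of $c$ (requires $\mb C\succ 0$), which pins $\mb r=\mb r^\star$;
\item $\mathcal{L}^{\text{FD}}=0$, i.e.\ \eqref{social-optimum:gamma}, from the gap term involving $\gamma^\star$ and $(D\omega-\mb 1^\top\mb r+\delta)^2$;
\item invariance, which forces $\dot\lambda=0$ and hence $\omega\to0$.
\end{itemize}
The delicate points are the weak decrease in $\omega$ and $\gamma$ (the Lagrangian is only linear in $\gamma$) and the nonsmooth projection. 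I would first try to bound the decrease through the quadratic $\gamma\mathcal L^{\text{FD}}$ term using $\gamma\ge\gamma(0)\ge0$, which holds since $\dot\gamma\ge0$. Finally, I would show each $\omega$-limit point is an equilibrium, and use that $V$ is nonincreasing about that particular equilibrium to obtain convergence to a single point.
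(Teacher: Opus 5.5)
Your proof is correct, and it takes a genuinely different route from the paper. The paper does not carry out a convergence proof. It recasts \eqref{system:primal-dual-composite-lag} as a saddle flow for the reformulation with the squared constraint $\tfrac{1}{2DM}(D\omega-\mb 1^\top\mb r+\delta)^2=0$ (multiplier $\gamma$). It then notes that stationary points of this formally nonconvex problem coincide with optima of \eqref{problem:social-optimum}, and defers convergence itself to the cited literature. The paper's view explains why $\gamma$ is undetermined at equilibrium: the squared constraint has zero gradient on its zero set. You instead observe that for $\gamma\ge0$, which is preserved because $\dot\gamma\ge0$, the composite Lagrangian is convex in $(\mb r,\omega)$ and linear in $(\lambda,\gamma)$. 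The standard Lyapunov/LaSalle argument for projected saddle flows therefore applies, giving an actual global convergence proof. It also makes the implicit assumptions explicit: $\mb C\succ0$, $\gamma(0)\ge0$, and $\Pi_{\mathcal P}$ read as projection onto the tangent cone of $\mathcal P$ at $\mb r$, which your projection inequality requires. Two refinements. First, the $\omega$-weight in $V$ must be $\tfrac12$, not $\tfrac M2$. By \eqref{primal-dual-composite-lag:freqeuncy}, $\dot\omega=-\nabla_\omega\mathcal L$ already holds exactly, and the weight $M$ would leave an indefinite cross term proportional to $(M-1)\gamma\,\omega\,(D\omega-\mb 1^\top\mb r+\delta)$. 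Second, the step you call delicate closes by direct computation. With unit weights, use $\mb 1^\top\mb r^\star=\delta$, $\omega^\star=0$, and the KKT inequality $(\nabla c(\mb r^\star)-\lambda^\star\mb 1)^\top(\mb r-\mb r^\star)\ge0$. This gives $\dot V\le-\norm{\mb r-\mb r^\star}_{\mb C}^2-\tfrac{\gamma+\gamma^\star}{2DM}(D\omega-\mb 1^\top\mb r+\delta)^2$. Since every $\gamma$ is an equilibrium value, you may fix $\gamma^\star>0$. Then $\dot V=0$ forces $\mb r=\mb r^\star$ and $\omega=0$, and invariance forces $\lambda$ to be a KKT multiplier. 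Your single-limit-point argument, with $\bar\gamma=\lim_t\gamma(t)\ge0$, then finishes the proof. No lower bound on $\gamma$ beyond nonnegativity is needed.
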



The proof follows similar ideas in \cite{zhao2016unified,zhang2015achieving,mallada2017optimal}, so we only provide the intuition. The dynamical system in \eqref{system:primal-dual-composite-lag} can be seen as a mixed-saddle flow dynamics of the following problem:
\begin{subequations}\label{problem:social-optimum-nonconvex}
\begin{align}
    \minimize{\mb{r}\in\mathcal{P}, \omega}\quad& c(\mb{r}) \\
    \st\quad
    & \mb{1}^\top \mb{r} - \delta = 0,  \quad & \colon \lambda\\
    & \tfrac{1}{2DM}(D\omega - \mb{1}^\top \mb{r} + \delta)^2 = 0,  &\colon \gamma \label{aux1}
\end{align}
\end{subequations}
where the last constraint \eqref{aux1} is always satisfied when enforcing \eqref{social-optimum:gamma} instead. That is, the stationary point of \eqref{problem:social-optimum-nonconvex} is the optimal solution of \eqref{problem:social-optimum}; moreover, if the dynamical system \eqref{system:primal-dual-composite-lag} converges to the stationary point of \eqref{problem:social-optimum-nonconvex}, it also converges to the socially optimal solution of \eqref{problem:social-optimum}, as desired.


Unlike power balance, which can be temporarily violated, the swing equation is a physical law and thus requires the hard constraint  $\gamma=1$. This recovers the original swing equation \eqref{frequency_D} in \eqref{primal-dual-composite-lag:freqeuncy} and simplifies the dynamics in \eqref{system:primal-dual-composite-lag}:
\begin{subequations}\label{system:primal-dual-composite-lag-2}
\begin{align} 
\dot{\mb{r}}&= \Pi_{\mathcal{P}}\left[(\lambda - \tfrac{1}{D}\dot{\omega})\mb{1}- \nabla_{\mb{r}}c(\mb{r})\right],\label{primal-dual-composite-lag-2:regulation}\\
\dot{\omega} &= \tfrac{1}{M}(\mb{1}^\top \mb{r} - D\omega - \delta),\label{primal-dual-composite-lag-2:freqeuncy}\\
\dot{\lambda}& = \delta - \mb{1}^\top \mb{r}, \label{primal-dual-composite-lag-2:lambda}
\end{align}
\end{subequations}
where generators in \eqref{primal-dual-composite-lag-2:regulation} optimally respond to power imbalance and frequency deviations according to marginal generation cost. Theorem \ref{th:socially-optimal-restoration} holds for the reduced dynamics as well.

\section{Online Pricing from Frequency Measurements}

In this section, we establish that the classic Walrasian tâtonnement price adjustment process \cite{uzawa1960walras} realizes within generator frequency control as a cost-optimal, local PID-like controller enabling real-time pricing on frequency timescales. 

\begin{figure*}[t]
    \centering
    \resizebox{0.9\textwidth}{!}{%
\begin{tikzpicture}
\node [draw=black, line width = 0.025cm, rounded corners = 2.5,minimum width=200, minimum height=30, align=center] (price_update) at (-3.5,3) {
$\begin{aligned}\pi(\omega(t)) = -M \omega(t)  - D\!\textstyle\int_{0}^{t}\!\omega(\tau)d\tau - \tfrac{1}{D}\dot{\omega}(t)\end{aligned}$};

\node [draw=black, line width = 0.025cm, rounded corners = 2.5,minimum width=215, minimum height=30, right of = price_update, xshift=25.0em, align=center] (regulation_update) {
$\begin{aligned}
     \mathbf{r}(t+1) = \Pi_{\mathcal{P}}\left[\mathbf{r}(t) + \eta \big(\lambda^{\text{da}} + \pi(\omega(t))- \nabla_{\mathbf{r}}c(\mathbf{r}(t)\big)\right]\!\!
\end{aligned}$
};

\node [draw=red, dashed, line width = 0.025cm, rounded corners = 2.5,minimum width=525, yshift=7.5em, xshift=-3.25em, minimum height=50] (controller) at (2.75,0.4) {};
\node[red,draw=none,fill=white,above of = controller,yshift=-0.25em,xshift=-1.2em,minimum width=77] {\large Controller}; 

\node [draw=blue, dashed, line width = 0.025cm, rounded corners = 2.5,minimum width=270, yshift=-2em, xshift=-3.75em, minimum height=80] (pwr_sys) at (2.75,0.4) {};
\node[blue,draw=none,fill=white,above of = pwr_sys,yshift=1.2em,xshift=-0.1em,minimum width=77] {\large Power Grid}; 

\node [draw=black, line width = 0.025cm, rounded corners = 2.5,minimum width=55, minimum height=40, xshift=14em, yshift=-1.25em, text depth=0em, align=center] (invert) {inverters\\ \includegraphics[width=1cm]{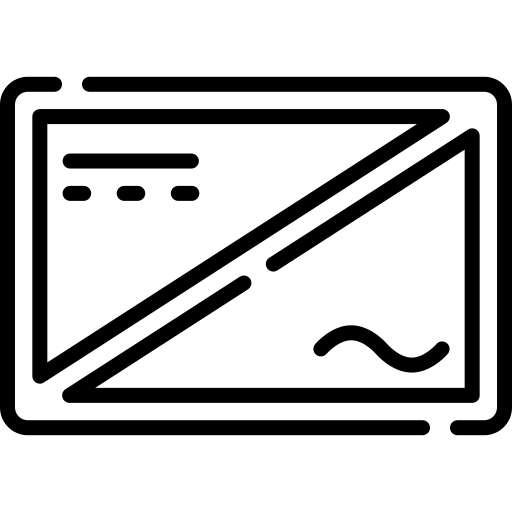}};

\begin{scope}[scale=1,shift={(-2.8,0)}]
        \node[solid,draw=black,thick,circle,fill=black!30,inner sep=1pt,minimum size=10pt] (pn1) at (0,-0.15)  {};
        \node[solid,draw=black,thick,circle,fill=black!30,inner sep=1pt,minimum size=10pt] (pn2) at (1.25,0.25) {};
        \node[solid,draw=black,thick,circle,fill=black!30,inner sep=1pt,minimum size=10pt] (pn4) at (0.75,-0.85) {};
        \node[solid,draw=black,thick,circle,fill=black!30,inner sep=1pt,minimum size=10pt] (pn6_) at (1.85,-1.2) {};
        \node[solid,draw=black,thick,circle,fill=black!30,inner sep=1pt,minimum size=10pt] (pn3) at (3.25,0.45) {};
        \node[solid,draw=black,thick,circle,fill=black!30,inner sep=1pt,minimum size=10pt] (pn6) at (5,0.3) {};
        \node[solid,draw=black,thick,circle,fill=black!30,inner sep=1pt,minimum size=10pt] (pnb) at (6,-0.75) {};
        \node[solid,draw=black,thick,circle,fill=black!30,inner sep=1pt,minimum size=10pt] (pn7) at (4.5,-1.25) {};
        
        \draw[draw=black,thick] (pn1) -- (pn2);
        \draw[draw=black,thick] (pn2) -- (pn3);
        \draw[draw=black,thick] (pn2) -- (pn4);
        \draw[draw=black,thick] (pn1) -- (pn4);
        \draw[draw=black,thick] (pn4) -- (pn6_);
        \draw[draw=black,thick] (pn3) -- (pn7);
        \draw[draw=black,thick] (pn6) -- (pn7);
        \draw[draw=black,thick] (pn6_) -- (pn7);
        \draw[draw=black,thick] (pn6_) -- (pn3);
        \draw[draw=black,thick] (pnb) -- (pn7);
        \draw[draw=black,thick] (pn3) -- (pn6);
        \draw[draw=black,thick] (pn6) -- (pnb);
\end{scope}
\node[draw=black!30,fill=white,opacity=0.8,rounded corners = 2.5,above of = pwr_sys,yshift=-3.25em,xshift=-3em,scale=0.9] {\large $\dot{\omega}(t) = f\big(\omega(t), \mb{r}(t), \delta(t)\big)$}; 

\draw[->,>=stealth,line width = 0.025cm] (price_update) -- node[above] {$\pi(\omega(t))$} (regulation_update);
\draw[black,->,>=stealth,line width = 0.025cm] (invert.east)+(0,1em) -- ++(3em,1em) --++(0em,4.5em) --++(-43em,0em) node[pos=0.825,above] {frequency measurement $\omega(t)$} --++(0em,4.3em) -- (price_update.west) ;
\draw[->,>=stealth,line width = 0.025cm] (regulation_update.east) -- ++(3em,0em) -- ++(0em,-11em) -- node[pos=0.5,above] {regulation setpoint $\mb{r}(t+1)$} ++(-15.6em,0em);
\draw[->,>=stealth,line width = 0.025cm] (-7.25,-0.875) -- node[pos=0.5,above] {disturbance $\delta(t)$} (-2.25,-0.875);
\end{tikzpicture}}
    \caption{Block diagram of the controller. The controller gets the local frequency measurement, computes the frequency-dependent price adjustment to the day-ahead price, then determines and implements the optimal regulation of active power output.}
    \label{fig:main}
\end{figure*}

Towards this goal, first observe that the electricity price update in \eqref{primal-dual-composite-lag-2:lambda} is given by the active power imbalance, which can also be expressed from the swing equation \eqref{primal-dual-composite-lag-2:freqeuncy} via frequency variables. The two yield
\begin{align}
    \dot{\lambda} = - D\omega -M\dot{\omega},\label{eq:price_diff}
\end{align}
where the price is driven by both the frequency deviation and the rate of frequency change. Suppose the initial system state corresponds to the day-ahead schedule, so that the price  and frequency deviation are respectively $\lambda(0)=\lambda^{\text{da}}$ and $\omega(0)=0$. Then, integrating both side of \eqref{eq:price_diff} from $0$ to $t$ yields
\begin{align}
    \lambda(t) = \lambda^{\text{da}} - D\textstyle\int_{0}^{t}\omega(\tau)d\tau -M\omega(t).
\end{align}

Substituting this to \eqref{primal-dual-composite-lag-2:regulation} leads to the following best generator response to frequency measurements alone:
\begin{subequations}\label{system:best_response_to_frequency}
\begin{align}
    &\dot{\mb{r}}(t)= \Pi_{\mathcal{P}}\Big[\big(\lambda^{\text{da}} + \pi(\omega(t))\big)\mb{1}- \nabla_{\mb{r}}c(\mb{r}(t))\Big],\label{eq:best_response_to_frequency}\\
    &\;\;\text{with}\;\; \pi(\omega(t)) \coloneqq -\underbrace{M \omega(t)}_{\text{P}}  - \underbrace{D\!\textstyle\int_{0}^{t}\!\omega(\tau)d\tau}_{\text{I}} - \underbrace{\tfrac{1}{D}\dot{\omega}(t)}_{\text{D}}.\label{eq:price_adjustment}
\end{align}
\end{subequations}

This regulation includes the frequency-dependent price adjustment $\pi(\omega(t))$ of the day-ahead price with the following intuition. At nominal frequency (zero frequency deviation), $\pi(\omega(t))=0$ and generators maintain their best response to the day-ahead price $\lambda^{\text{da}}$, and thus the original generation schedule. For negative frequency deviations, $\pi(\omega(t))>0$ adjusts the day-ahead price upwards incentivizing generators to meet the growing demand. For positive frequency deviations, $\pi(\omega(t))<0$ 
reduces the day-ahead price prompting generators to regulate downwards to meet the reduced demand. Regulation \eqref{system:best_response_to_frequency} thus integrates the classic Walrasian t\^atonnement price adjustment process within generator frequency response. 

The price adjustment \eqref{eq:price_adjustment} contains the three canonical components of PID control: a proportional term (P) reacting to instantaneous frequency deviation, an integral term (I) accumulating frequency error, and a derivative term (D) responding to the rate of frequency change. Provided that the day-ahead price and the current system inertia and damping are known, it can be realized locally by generators from frequency measurements alone. The day-ahead prices are publicly announced well ahead of real-time operation.  Recent work on converter probing and measurement-based identification demonstrates that system-level inertia and damping can be periodically estimated and broadcast by system operators \cite{shahraki2026online}.

In practice, regulation is implemented in discrete time:
\begin{subequations}\label{eq:update_discrete_time}
\begin{align}
    \hat{\mathbf{r}}_{i}(t+1) = \mathbf{r}_{i}(t) + \eta_{i} \big(\lambda^{\text{da}} + \pi(\omega(t))- \nabla_{\mathbf{r}}c_{i}(\mathbf{r}_{i}(t)\big),&\\
    \quad\text{followed by}\quad \mathbf{r}_{i}(t+1)= \Pi_{\mathcal{P}_{i}}\Big[\hat{\mathbf{r}}_{i}(t+1)\Big],&
\end{align}
\end{subequations}
for each generator $i$, as illustrated in Fig. \ref{fig:main}. The step size $\eta_{i}>0$ is the design parameter controlling the speed of generator response to frequency-dependent prices. The selection of the step size is economically motivated: small $\eta_{i}$ may prevent generators from responding fast enough to rapidly changing prices, while large $\eta_{i}$ may increase costs faster than electricity price adjusts. Accordingly, the following result establishes that for fully dispatchable generators, the step size can be optimally chosen to guarantee cost recovery at each time step.

\begin{theorem}[Cost recovery] \label{thm2}
The step size $\eta_{i}=\mb{C}_{ii}^{-1}$, inversely proportional to the quadratic cost coefficient, guarantees nonnegative profit for each generator $i$ at every time step.
\end{theorem}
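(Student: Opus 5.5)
Throughout, I write $\lambda(t)\coloneqq\lambda^{\text{da}}+\pi(\omega(t))$ for the online price that generator $i$ sees at step $t$. I define generator $i$'s profit at step $t+1$ as its revenue at this price, $\lambda(t)\,\mb{p}_i$, minus its cost $c_i(\mb{p}_i)=\tfrac12\mb{C}_{ii}\mb{p}_i^2+\mb{c}_i\mb{p}_i$. Here $\mb{p}_i=\mb{p}^\star_i+\mb{r}_i(t+1)$ is the total output after the update \eqref{eq:update_discrete_time}. The plan is to show that, with $\eta_i=\mb{C}_{ii}^{-1}$, the update makes $\mb{p}_i$ the exact maximizer of the period profit over the feasible interval. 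Feasibility of a zero-profit alternative then gives nonnegativity.

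First, the unprojected step. Since $\nabla_{\mb{r}}c_i(\mb{r}_i(t))=\mb{C}_{ii}(\mb{p}^\star_i+\mb{r}_i(t))+\mb{c}_i$, choosing $\eta_i=\mb{C}_{ii}^{-1}$ gives
\begin{align*}
\mb{p}^\star_i+\hat{\mb{r}}_i(t+1)=\mb{C}_{ii}^{-1}\big(\lambda(t)-\mb{c}_i\big).
\end{align*}
The dependence on $\mb{r}_i(t)$ cancels, so the step is a Newton step. It lands exactly on the unconstrained maximizer of the strictly concave profit $\lambda(t)\mb{p}_i-c_i(\mb{p}_i)$.

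Second, the projection. Projection onto $\mathcal{P}_i$ in $\mb{r}$-coordinates is projection of $\mb{p}_i$ onto $[\underline{\mb{p}}_i,\overline{\mb{p}}_i]$. For a one-dimensional strictly concave quadratic, clipping the unconstrained maximizer to an interval gives the constrained maximizer. Hence $\mb{p}_i$ maximizes profit over $[\underline{\mb{p}}_i,\overline{\mb{p}}_i]$.

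Third, comparison with a zero-profit point. Interpreting ``fully dispatchable'' as $\underline{\mb{p}}_i\leqslant 0\leqslant\overline{\mb{p}}_i$, so that $\mb{p}_i=0$ is feasible, its profit is $0$, and the maximizer's profit is therefore $\geq 0$. Equivalently, one can evaluate the profit at the unconstrained point, where it equals $(\lambda(t)-\mb{c}_i)^2/(2\mb{C}_{ii})\geq 0$. This expression also makes a natural supporting remark.

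The main obstacle is the definition of profit. The theorem only holds under the right convention, namely revenue $\lambda(t)\times$ output against the cost of that output, both taken at the same step. Pairing the price with the cost of the previous setpoint, or using the regulation $\mb{r}_i$ alone, would break the argument. A second issue is the lower bound: if $\underline{\mb{p}}_i>0$ and $\lambda(t)$ drops below average cost, profit can be negative. The ``fully dispatchable'' hypothesis must therefore be read as allowing zero output, and I would state that reading explicitly in the proof.
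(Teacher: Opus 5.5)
Your argument is correct, and it reaches the conclusion more directly than the paper does. The paper keeps $\eta_i$ general. It writes the projected update as the maximizer of the proximal objective $\big[\mb{g}_i(t)+\eta_i(\lambda^{\text{rt}}(t)-\mb{C}_{ii}\mb{g}_i(t)-\mb{c}_i)\big]\mb{g}_i-\tfrac12\mb{g}_i^2$ over $[\underline{\mb{p}}_i,\overline{\mb{p}}_i]$. It then lower-bounds the optimal value by $0$ through the Lagrange dual, whose objective is nonnegative when $\underline{\mb{p}}_i=0$, together with strong duality. Only afterwards does it rearrange to get profit $\geqslant(\eta_i^{-1}-\mb{C}_{ii})(\tfrac12\mb{g}_i^2-\mb{g}_i(t)\mb{g}_i)$, whose right-hand side vanishes at $\eta_i=\mb{C}_{ii}^{-1}$.

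You specialize $\eta_i$ first, which makes visible what the paper's algebra obscures. At this step size the proximal objective is exactly the profit divided by $\mb{C}_{ii}$, and the dependence on $\mb{r}_i(t)$ cancels. The update is then simply the generator's clipped best response to $\lambda(t)$. The duality step collapses to your comparison with the feasible zero-output point. That comparison needs no strong duality and covers the slightly weaker hypothesis $\underline{\mb{p}}_i\leqslant 0\leqslant\overline{\mb{p}}_i$.

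What each route buys: the paper's gives a profit bound valid for every $\eta_i$, from which $\mb{C}_{ii}^{-1}$ emerges as the step size that removes the residual term. Yours gives transparency and the best-response economic interpretation.

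Your time-index convention, pairing price $\lambda(t)$ with output $\mb{p}_i^\star+\mb{r}_i(t+1)$, is also exactly what the paper's proof establishes. Its $\mb{g}_i$ is the optimizer of the update problem, even though the target display \eqref{eq:gen_profit} is written with $\mb{r}_i(t)$.

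One small correction: $(\lambda(t)-\mb{c}_i)^2/(2\mb{C}_{ii})$ is the realized profit only when the projection is inactive. When clipping is active it is an upper bound. So it is a special case, not an ``equivalent'' argument.
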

\begin{proof}
We need to establish that
\begin{align}\label{eq:gen_profit}
    \big(\lambda^{\text{da}} + \pi(\omega(t))\big)\big(\mathbf{p}_{i}^\star  + \mathbf{r}_{i}(t)\big) - c_{i}\big(\mathbf{r}_{i}(t)\big) \geqslant 0,
\end{align}
i.e., the profit function  is nonnegative at any time step $t$. For notational convenience, define:
\begin{align}\label{eq:notation}
    &\mathbf{g}_{i}(t) = \mathbf{p}_{i}^\star  + \mathbf{r}_{i}(t),\quad\lambda^{\text{rt}}(t)=\lambda^{\text{da}} + \pi(\omega(t)),
\end{align}
as real-time generation and price, respectively.

We first observe that regulation update \eqref{eq:update_discrete_time} can be obtained as the solution to the following optimization problem:
\begin{subequations}\label{problem:eq_opt}
\begin{align}
    \maximize{\mathbf{g}_{i}}\;&  \Big[ \mathbf{g}_{i}(t) \!+\!\eta_{i} \big[\lambda^{\text{rt}}(t) \!-\! \mathbf{C}_{ii}\mathbf{g}_{i}(t)\!-\!\mathbf{c}_{i}\big]\!\Big]\mathbf{g}_{i}\!-\!\tfrac{1}{2}\mathbf{g}_{i}^{2}\label{eq_opt_obj}\\
    \st\;&\underline{\mathbf{p}}_{i} \leqslant \mathbf{g}_{i} \leqslant \overline{\mathbf{p}}_{i}\quad\colon\underline{\mbg{\alpha}}_{i}, \overline{\mbg{\alpha}}_{i},
\end{align}
\end{subequations}
as verified by its Karush-Kuhn-Tucker conditions. We show that objective \eqref{eq_opt_obj} is nonnegative and use this to bound the generator's profit.

The dual problem associated with problem \eqref{problem:eq_opt} is
\begin{align}
\minimize{\underline{\mbg{\alpha}}_{i}, \overline{\mbg{\alpha}}_{i}\geqslant0}\quad& \tfrac{1}{2} \mathbf{g}_{i}^{\star}(\underline{\mbg{\alpha}}_{i},\overline{\mbg{\alpha}}_{i})^{2} - \underline{\mathbf{p}}_{i}\underline{\mbg{\alpha}}_{i} + \overline{\mathbf{p}}_{i}\overline{\mbg{\alpha}}_{i}\label{problem:dual_eq}
\end{align}
where the quadratic term $\mathbf{g}_{i}^{\star}(\underline{\mbg{\alpha}}_{i},\overline{\mbg{\alpha}}_{i})^2$ comes from the stationarity condition of problem \eqref{problem:eq_opt}, i.e., 
\begin{align*}
    \mathbf{g}_{i}^{\star}(\underline{\mbg{\alpha}}_{i},\overline{\mbg{\alpha}}_{i}) =  \mathbf{g}_{i}(t) \!+\! \eta_{i}\big[\lambda^{\text{rt}}(t) \!-\! \mathbf{C}_{ii}\mathbf{g}_{i}(t) \!-\! \mathbf{c}_{i}\big] \!-\! \underline{\mbg{\alpha}}_{i} \!+\! \overline{\mbg{\alpha}}_{i}.
\end{align*}

The quadratic term in \eqref{problem:dual_eq} is nonnegative. For fully dispatchable generators with $\underline{\mathbf{p}}_{i}=0$, the second term vanishes. The third term is nonnegative by dual feasibility $\overline{\mbg{\alpha}}_{i}\geqslant0$ and $\overline{\mathbf{p}}_{i}$. Thus the dual objective is nonnegative. By strong duality, the primal objective \eqref{eq_opt_obj} is also nonnegative: 
\begin{align}
    \Big[ \mathbf{g}_{i}(t) + \eta_{i} \big[\lambda^{\text{rt}}(t) - \mathbf{C}_{ii}\mathbf{g}_{i}(t) - \mathbf{c}_{i}\big]\Big]\mathbf{g}_{i} - \tfrac{1}{2}\mathbf{g}_{i}^{2} \geqslant 0.
\end{align}

Adding and subtracting the term $\tfrac{1}{2} \mathbf{C}_{ii}\mathbf{g}_{i}^{2}$ corresponding to the quadratic part of the cost function, and rearranging the terms helps isolating the profit on the left-hand side: 
\begin{align}
&\lambda^{\text{rt}}(t)\mathbf{g}_{i} - \tfrac{1}{2} \mathbf{C}_{ii}\mathbf{g}_{i}^2  -  \mathbf{c}_{i}\mathbf{g}_{i} \geqslant \nonumber\\
& - \tfrac{1}{\eta_{i}}\mathbf{g}_{i}(t)\mathbf{g}_{i} + \mathbf{C}_{ii}\mathbf{g}_{i}(t)\mathbf{g}_{i} + \tfrac{1}{2\eta_{i}}\mathbf{g}_{i}^{2} - \tfrac{1}{2}\mathbf{C}_{ii} \mathbf{g}_{i}^{2}.
\end{align}

Setting the step size $\eta_{i}=\mb{C}_{ii}^{-1}$ makes the right-hand side vanish, yielding
\begin{align}
    &\lambda^{\text{rt}}(t)\mathbf{g}_{i} - \tfrac{1}{2} \mathbf{C}_{ii}\mathbf{g}_{i}^2  -  \mathbf{c}_{i}\mathbf{g}_{i} \geqslant 0.
\end{align}
Substituting \eqref{eq:notation} recovers \eqref{eq:gen_profit} as desired.
\end{proof}

\section{Numerical Results}
We present numerical results in a stylized single-area power system with five generators. The nominal demand $d=200$ MW, and the maximum capacity of fully dispatchable generators is set to $50$ MW. The linear cost coefficients are set to $\$27.4/$MWh, and the quadratic costs range from \$0.01 to \$0.015 per MWh$^2$ across the generators. The inertia and damping constants are chosen to be $M=12$ s and $D=35$ MW/Hz. The step size $\eta_i$ is chosen for every generator according to Theorem \ref{thm2}. The frequency dynamics are simulated on a $0.05$ s timescale, and the sampling (measurement) time is $0.25$ s. We test the controller in two settings: constant load disturbance, where step disturbances settle to steady state over time, and time-varying with an evolving stochastic demand process.



\subsection{Constant Disturbance}
\begin{figure}
    \centering
    \includegraphics[width=1\linewidth]{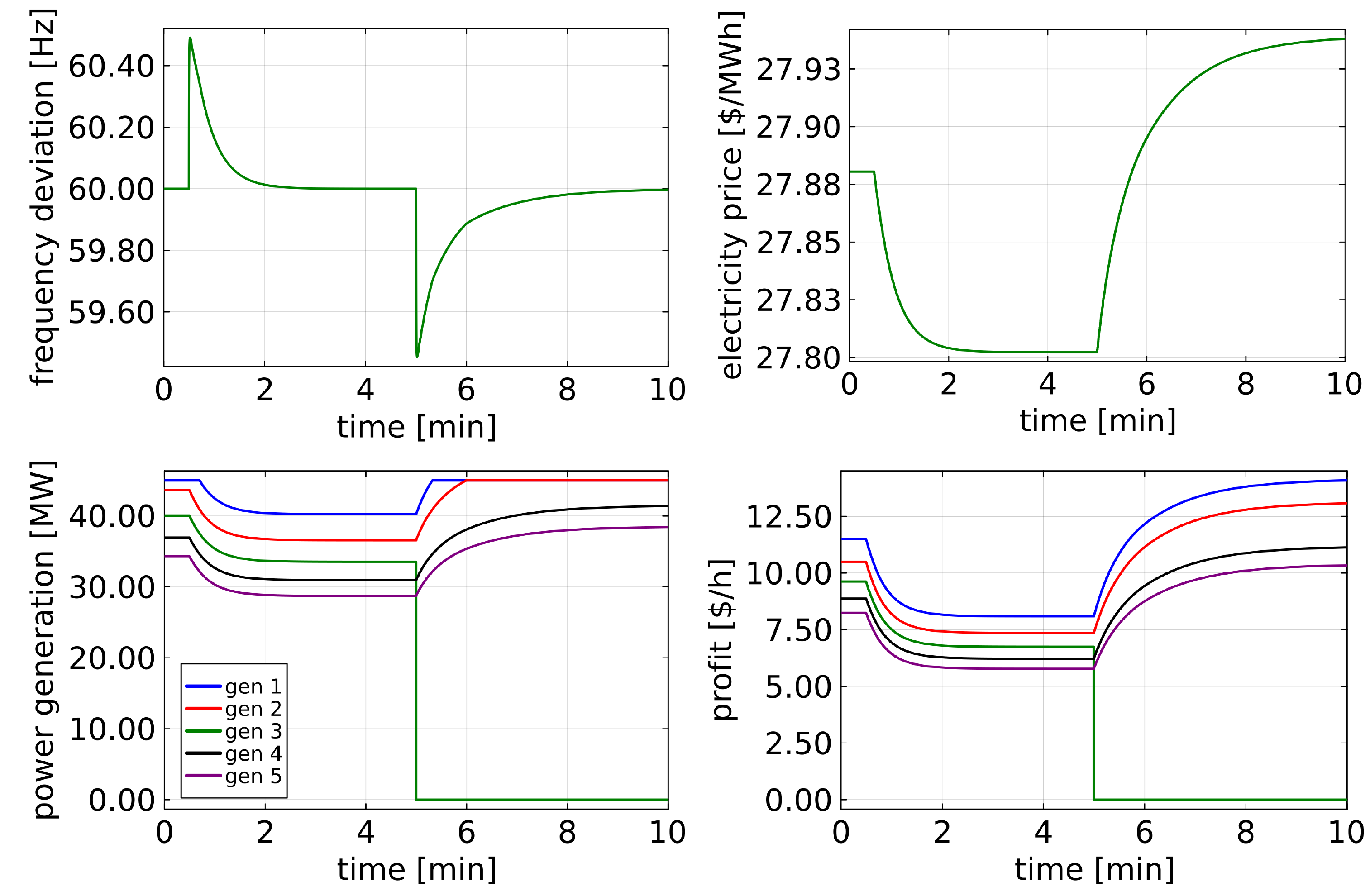}
    \caption{Frequency, electricity price, generation, and profit dynamics.}
    \label{fig:time-invariant}
\end{figure}
We introduce two events: a step decrease in demand of $30$ MW at $t = 30$ s, followed by generator outage at $t = 5$ minutes, both having impacts on frequency and market dynamics as shown in Fig. \ref{fig:time-invariant}. The system frequency is successfully restored after the two events (top-left) as generators respond to dynamic prices (top-right). As expected, the price dynamically decreases after the first event and increases after the second event. The generator response (bottom-left) is consistent with the expected frequency response, and profits (bottom-right) intuitively follow the demand pattern. This correspondence between physical and market variables proves the ability of the integrated dynamics to translate physical measurements into efficient prices, providing sufficient incentives for distributed generators to restore frequency to the nominal value online.

\subsection{Time-Varying Disturbance}
We model the stochastic demand as a Wiener process with zero drift and a standard deviation of 1 MW. Fig. \ref{fig:convergence_transient} illustrates frequency and market dynamics. The controller maintains the frequency around 60 Hz in a time-varying simulation. The top-right plot illustrates the contrast between offline pricing on a 5-minute basis and online pricing of the integrated market-frequency dynamics. Observe that online pricing respects intra-interval frequency changes ignored in the offline market. The bottom-left plot shows resultant generation dynamics: the pale background trajectories are the offline solution, as if we were able to solve economic dispatch at every simulation time step. This illustrates the efficient tracking performance of the controller. The revenues in online and offline settings are given on the bottom-right panel. The online revenues on the frequency (measurement) timescale internalize the variability of frequency dynamics, in contrast to the slow offline revenue dynamics, which are decoupled from actual operation.

\begin{figure}
    \centering
    \includegraphics[width=1\linewidth]{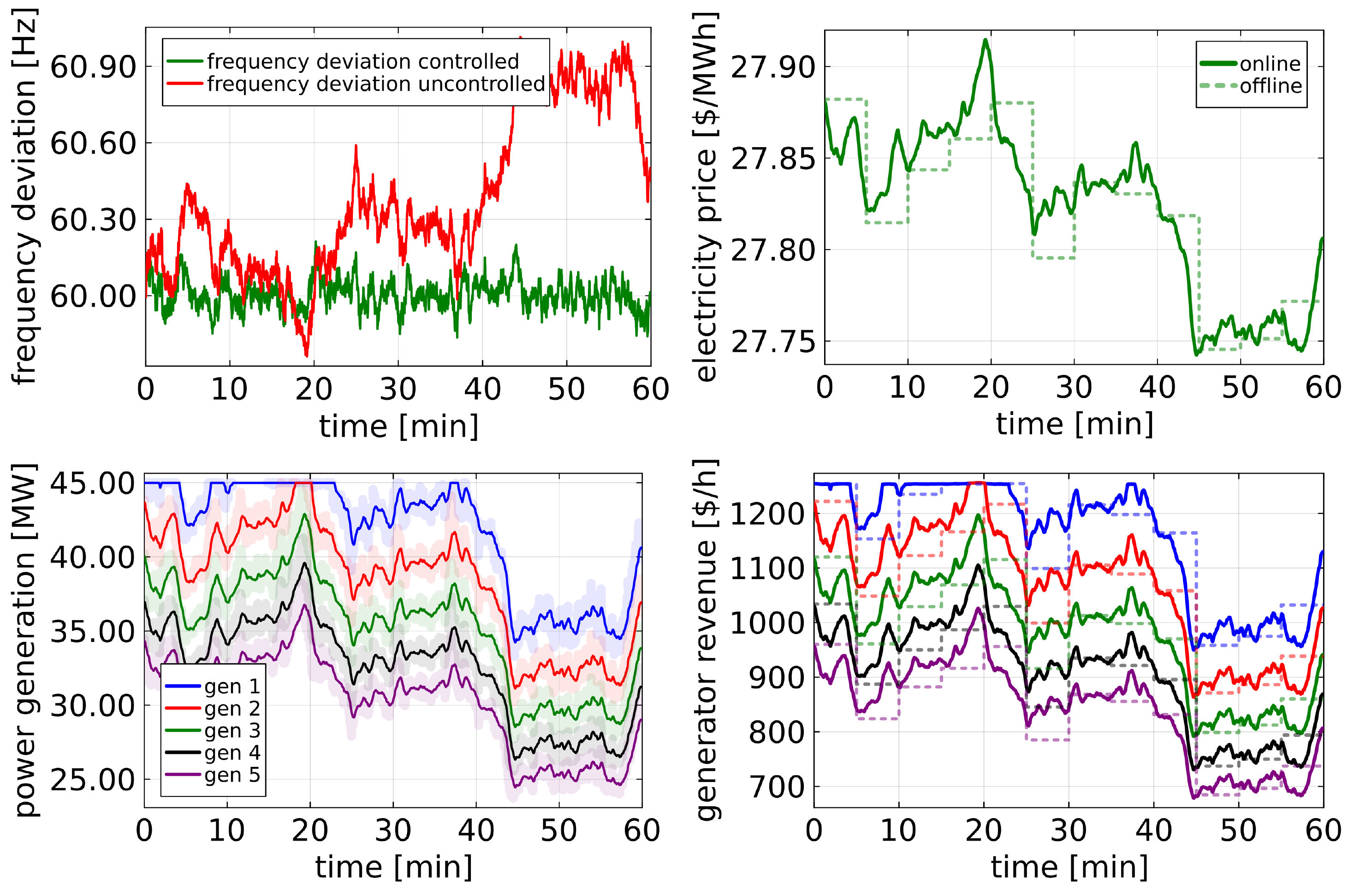}
    \caption{Frequency and market dynamics in time-varying setting.}
    \label{fig:convergence_transient}
\end{figure}

\begin{figure}
  \centering
  \includegraphics[width=1\linewidth]{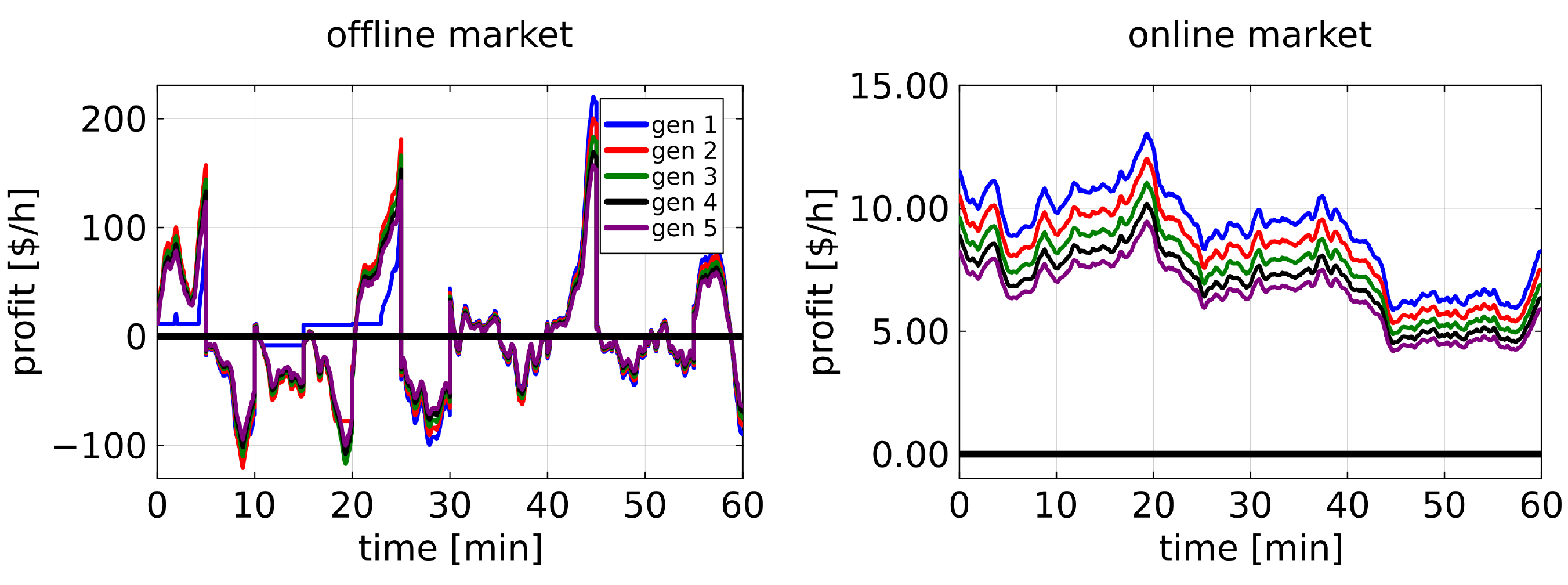}
  \caption{Offline versus online market profits of each generator. }
  \label{fig:profit}
\end{figure}

Lastly, we compare profits in both offline and online market settings in Fig. \ref{fig:profit} for the same frequency response of generators. Under offline prices that do not internalize the per-instance cost of real-time response to frequency deviations, profits are highly volatile, oscillatory, and frequently negative. All are consequences of timescale separation between frequency and market dynamics. Under online pricing, the profit dynamics are more stable and remain in the non-negative domain. These dynamics confirm the result of Theorem \ref{thm2}, ensuring the profitability of response to frequency-derived prices under optimal step-size selection.

\section{Conclusion}
We developed online real-time pricing for electricity on frequency timescales using integrated market-frequency dynamics, realizing it in the form of a local PID controller. Similar to automatic generator control, the sudden changes in load and generation are accommodated in response to frequency deviations. However, in our design, generators respond to local frequency-derived prices that ensure convergence to the social optimum and profitability of frequency response. In future work, we will expand the scope of this letter to transmission network dynamics and focus on computing locational marginal prices from bus frequency measurements. In addition to cost recovery in networked settings, we will focus on establishing the budget balance property of the online market.



\section{AI Usage Disclosure}

The authors used AI tools to assist with writing, including spell-checking and streamlining arguments, as well as for coding and debugging. None of the narrative or code was directly produced by AI; it was used solely in an advisory role. All models and ideas are the sole intellectual property of the authors, and no AI contributed to their development.

\bibliographystyle{ieeetr}
\bibliography{references}
\balance

\endgroup
\end{document}